\documentclass[10pt,journal,a4paper]{IEEEtran}
\usepackage[T1]{fontenc}
\usepackage[utf8]{inputenc}
\usepackage{amsmath,amssymb,amsthm,booktabs,graphicx,microtype}
\usepackage[hidelinks]{hyperref}
\usepackage{url}

\hypersetup{
  pdftitle={Price Information Is Not Enough: Ordering and Decision Rules in
            Storage Bidding},
  pdfauthor={Brieuc Le Roux-Tardif}
}

\newcommand{\NDays}{939}
\newcommand{\NCases}{136}

\newcommand{\BaseRevenue}{47.5}
\newcommand{\StarRevenue}{61.1}
\newcommand{\UpliftRevenue}{13.6}
\newcommand{\BaseShare}{78}
\newcommand{\UpliftShare}{22}
\newcommand{\UpliftDay}{37.3}
\newcommand{\PriceSpan}{972}
\newcommand{\HoeffdingS}{486}
\newcommand{\EmpiricalS}{119}
\newcommand{\ResidualSd}{39.2}
\newcommand{\PriceSd}{49.8}
\newcommand{\ExplainedVar}{38}
\newcommand{\ThroughputCap}{6.53}
\newcommand{\RequiredNatsHoeffding}{6.9}
\newcommand{\RequiredNatsEmpirical}{1.15}

\newcommand{\SlackAtRef}{115}
\newcommand{\RefSigma}{40}
\newcommand{\NatsAtRef}{15}
\newcommand{\NaiveAtRef}{-121}
\newcommand{\BayesAtRef}{-51}
\newcommand{\BayeshAtRef}{-46}
\newcommand{\NaiveBreakEven}{10}
\newcommand{\AllNegativeSigma}{20}
\newcommand{\WorstSigma}{80}

\newcommand{\TailSigma}{1280}
\newcommand{\TailBayesh}{7}
\newcommand{\ClimatologyRank}{0.82}
\newcommand{\OrdinalCapture}{90}
\newcommand{\OrdinalRevenue}{59.7}
\newcommand{\SwapCapture}{-139}
\newcommand{\SwapRevenue}{28.6}
\newcommand{\SwapLossShare}{53}
\newcommand{\LongBase}{80.4}
\newcommand{\LongStar}{100.1}
\newcommand{\LongBaseShare}{80}
\newcommand{\LongOrdinalCapture}{91}
\newcommand{\LongSwapCapture}{-60}

\newcommand{\RankRsq}{0.99}
\newcommand{\NatsRsq}{0.34}
\newcommand{\CorrPoints}{10}
\newcommand{\CorrRankRsq}{0.99}
\newcommand{\CorrNaiveAtRef}{-98}
\newcommand{\CorrBayesAtRef}{-17}
\newcommand{\CorrBreakEven}{20}

\newcommand{\eur}{EUR}
\newtheorem{theorem}{Theorem}
\newtheorem{proposition}[theorem]{Proposition}
\newtheorem{corollary}[theorem]{Corollary}
\newtheorem{assumption}{Assumption}

\begin{document}

\title{Price Information Is Not Enough:\\
Ordering and Decision Rules in Storage Bidding}

\author{Brieuc Le~Roux-Tardif%
\thanks{The author is with IMT Nord Europe, Institut Mines-T\'el\'ecom,
Univ.\ Lille, Centre for Energy and Environment, F-59000 Lille, France
(e-mail: brieuclerouxtardif@gmail.com).}%
\thanks{Manuscript prepared 8 August 2026. The hypothesis, the experiments and
the falsification clauses were written down before any result was produced; the
protocol file is versioned with the code.}}

\markboth{Preprint, August 2026}{Le Roux-Tardif: Euros per Bit}

\maketitle

\begin{abstract}
Price forecasts are evaluated in EUR/MWh of error, while storage earns euros;
their link depends on the decision rule. We separate the optimal value of a
signal $V(S\mid C)$ from the revenue $J(g,S)$ achieved by an implemented policy.
For a price-taking asset with daily throughput bound $L$ and a conditionally
sub-Gaussian price law of scale $s$, the optimal gain over climatology is at most
$Ls\sqrt{2I(S;\Pi\mid C)}$. This is a square-root envelope, not a prediction for
an individual policy. On \NDays{} French day-ahead days the tested
bound sits at least two orders of magnitude above the reference uplift. Along a
nested Gaussian garbling family, however, plug-in policy revenue is non-monotone
even though signal information is monotone: at the residual-noise scale the best
Gaussian-shrinkage rule achieves \BayeshAtRef\% of the uplift, below climatology.
The mechanism is ordinal. On a relaxation, optimal schedules depend on interval
ordering and profitable-pair tests. A constructed rank-only bid captures
\OrdinalCapture\% of the uplift; exchanging the cheapest and dearest entries of
an otherwise exact price vector cuts total revenue by \SwapLossShare\%. Within
the designed sweep, rank agreement has in-sample $R^2=\RankRsq{}$, against
\NatsRsq{} for the information upper bound. Climatology already earns
\BaseShare\% of perfect-foresight revenue. Information quantity alone therefore
does not value a forecast-driven policy; realised schedules and ordering must be
evaluated explicitly.
\end{abstract}

\begin{IEEEkeywords}
Energy storage, electricity price forecasting, value of information, mutual
information, forecast evaluation, decision-focused learning.
\end{IEEEkeywords}

\section{Introduction}

\IEEEPARstart{T}{he} literature on electricity price forecasting is large,
mature, and evaluated almost entirely in units of price
\cite{nowotarski2018,lago2021}. A forecast is judged by mean absolute error, by
root mean squared error, or by a probabilistic score. A storage operator,
however, does not consume forecasts; it consumes schedules, and it is paid in
euros. The two currencies are connected only by folklore.

This paper asks the connection question in its sharpest form. A signal about
tomorrow's prices carries a certain amount of information, measurable in nats.
How many euros can that information buy? The question admits a general bound for a
price-taking storage asset, because the revenue of a committed schedule is
linear in the settlement price, and because the feasible set of schedules is
fixed before the signal is observed.

Our first result is a bound. Writing $I$ for the conditional mutual information
between the signal and the day's price vector beyond the calendar context, $L$
for the daily energy throughput the asset
is permitted, and $s$ for a sub-Gaussian scale of the price law, the revenue
gain of any signal over a bidder who knows only the climatological price profile
is at most $Ls\sqrt{2I}$. This is a square-root upper envelope; it does not imply
that achieved revenue is concave or monotone in $I$.

Our second result is that this bound is too loose to rank practical policies.
On \NDays{} days of French day-ahead prices, the tested information upper bounds
are orders of magnitude above the theorem's necessary threshold, yet none of
the plug-in policies captures the whole uplift. Several policies even destroy
revenue: at the residual-noise scale the best Gaussian-shrinkage rule leaves the
asset with \BayeshAtRef\% of the uplift, materially worse than ignoring the
signal. The failure belongs to the policy that consumes the signal, not to the
optimal value of information, because an optimal bidder can always ignore it.

Our third result explains why, and is constructive. The optimal schedule of a
price-taking storage asset is decided by which intervals are cheapest and which
are dearest; the magnitudes enter only through a threshold that the round-trip
efficiency imposes. Mutual information measures all dependence, including
ordering, but its scalar amount does not identify which decision-relevant
distinctions survive. When we replace it by a simple ordinal statistic, the mean
daily rank agreement between the signal and the price, the picture collapses: it
explains \RankRsq{} of the variation in captured revenue within this designed
sweep, against \NatsRsq{} for the logarithm of the information upper bound.

A fourth measurement frames all of the above and is worth stating on its own. A
bidder who knows nothing about today, only the average price for this month and
this hour, earns \BaseRevenue{} k\eur{} per megawatt-year against
\StarRevenue{} k\eur{} for a bidder with perfect foresight. The climatological
bidder therefore captures \BaseShare\% of the value, and the entire object of
price forecasting for this asset is the remaining \UpliftShare\%, worth
\UpliftRevenue{} k\eur{} per megawatt-year.

The contributions are, in order: a bound relating forecast value to forecast
information with a $\sqrt{\cdot}$ envelope; a pre-registered experiment showing
that this envelope is loose and that plug-in policy revenue is non-monotone
along a nested garbling family; the identification of the ordinal structure; and
the resulting recommendation that forecasts intended for storage be evaluated
ordinally and be rejected outright when they fail to beat the climatology on
ordering.

\section{Related Work}

The value of information has an exact theory. Blackwell's ordering says when one
experiment dominates another for every decision problem \cite{blackwell1953},
and stochastic programming operationalises the special cases as the expected
value of perfect information and the value of the stochastic solution
\cite{birge2011}. Neither delivers what an operator wants, which is a number
attached to a specific forecast against a specific asset. Our bound is coarser
than Blackwell's ordering and, for that reason, computable from published price
series alone.

Forecast evaluation has moved decisively towards proper scoring rules
\cite{gneiting2007} and towards the recognition that a point forecast is
meaningless without the loss functional it serves \cite{gneiting2011}. Our
result is a concrete instance of Gneiting's argument in a market setting, with
an unusual twist: the relevant functional is not a quantile or a mean but an
ordering, and no standard score targets orderings.

In the storage literature, price uncertainty is normally handled by stochastic
or approximate dynamic programming \cite{lohndorf2010}, and reviews of modelling
practice treat forecast quality as an input rather than as an object to be
priced \cite{sioshansi2022}. Electricity price forecasting reviews report skill
in price units and leave the translation to euros implicit
\cite{nowotarski2018,lago2021}. To our knowledge no work bounds achievable
revenue by the information content of the signal, which is the gap this paper
addresses.

The closest body of work is decision-focused learning, which trains a predictor
against the loss of the downstream optimisation rather than against a
statistical error \cite{donti2017,elmachtoub2022}. That literature and this
paper share a premise and differ in direction. It builds estimators aligned with
the decision; we ask what the decision can extract from a signal at all, and
find the answer in the ordinal structure that the smart predict-then-optimise
loss implicitly targets. The two are complementary: Proposition~\ref{pr:ordinal}
identifies, for this problem, the object that a decision-focused loss would have
to learn.

The mathematical tools are standard: mutual information and its properties
\cite{shannon1948,cover2006}, the transport-entropy inequalities of Marton and
of Bobkov and G\"otze \cite{marton1996,bobkov1999}, and Hoeffding's lemma
\cite{hoeffding1963,boucheron2013}. The contribution is not the tools but the
observation that assembling them gives an operational ceiling, and that the
ceiling is uninformative in a way that identifies the real constraint.

\section{Setup}

\subsection{The decision problem}

Fix a day divided into $n$ intervals. Let $C$ be calendar information known
before forecasting (month and interval labels), let
$\Pi=(\pi_1,\dots,\pi_n)$ be the random vector of settlement prices and let
$x=(x_1,\dots,x_n)$ be a schedule,
where $x_t$ is the net energy delivered to the grid in interval $t$, negative
when charging. Revenue is
\begin{equation}
  \langle x,\Pi\rangle=\sum_t x_t\pi_t .
  \label{eq:revenue}
\end{equation}

\begin{assumption}[Fixed feasible set]\label{as:polytope}
Schedules live in a compact set $X\subset\mathbb R^n$ that does not
depend on price. $X$ encodes the power limit, the energy limit, the state of
charge dynamics with round-trip efficiency, a daily cycle budget, and the
requirement that the state of charge return to its starting value.
\end{assumption}

Convexity is not required: the support function used below is convex even when
$X$ includes a fixed charge/discharge exclusion. For speed, the pure-arbitrage
implementation introduces its exclusion binary only at negative bid prices;
an assembly-time assertion verifies zero simultaneous charge and discharge at
every interval of all \NCases{} returned schedules. Hence every reported schedule is
feasible for the same physical set $X$.

Two quantities of $X$ will appear. The first is the throughput
\begin{equation}
  L=\max_{x\in X}\|x\|_1 ,
  \label{eq:throughput}
\end{equation}
the largest total energy, charged plus discharged, that a feasible day can move.
For the asset studied here $L$ follows from the cycle budget and the efficiency
without any measurement, and equals \ThroughputCap{}~MWh. The second is the
support function
\begin{equation}
  h(p)=\max_{x\in X}\langle x,p\rangle ,
  \label{eq:support}
\end{equation}
the revenue of the best schedule against a price vector $p$. It is convex,
positively homogeneous, and Lipschitz with constant $L$ for the supremum norm on
prices.

\subsection{Three bidders}

A bidder observes $(S,C)$ before committing and chooses a schedule measurable
with respect to them. Because \eqref{eq:revenue} is linear in price, the best
such schedule maximises $\langle x, \mathbb E[\Pi\mid S,C]\rangle$, so the
posterior mean is a sufficient statistic and
\begin{equation}
  V(S\mid C)=\mathbb E\bigl[h(\mathbb E[\Pi\mid S,C])\bigr].
  \label{eq:value}
\end{equation}
This is the one place where the linearity of revenue does real work, and it is
worth stating plainly: the entire distributional content of a forecast is
irrelevant to this bidder except through its mean. For an implemented policy
$g$, define $J(g,S)=\mathbb E\langle g(S,C),\Pi\rangle$. Then
$J(g,S)\le V(S\mid C)$; unlike $V$, $J$ can fall below the climatological
policy. Three optimal values bracket the problem:
\begin{itemize}
\item $V_0=\mathbb E[h(\mathbb E[\Pi\mid C])]$, the \emph{climatological} bidder, who knows the
      average price for this month and this hour and nothing else;
\item $V(S\mid C)$, the optimal bidder with a signal;
\item $V_\star=\mathbb E[h(\Pi)]$, the perfect-foresight bidder.
\end{itemize}
Conditional Jensen gives $V_0\le V(S\mid C)\le V_\star$ for every signal. We call
$V_\star-V_0$ the \emph{uplift}, and we report every result as a share of it,
because it is the only part of the revenue that any forecast can address.
Fig.~\ref{fig:decision-chain} places the two value objects side by side, since
the distinction between them carries the negative result of
Section~\ref{sec:results}.

\begin{figure*}[t]
\centering
\includegraphics[width=0.86\textwidth]{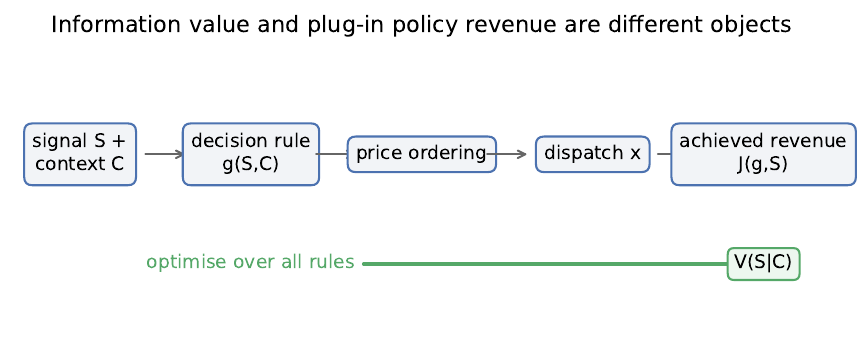}
\caption{Decision chain and the two value objects. The upper path follows one
implemented rule from signal and calendar context to its achieved revenue
$J(g,S)$. Optimising over all admissible rules gives $V(S\mid C)$ instead.
Consequently a plug-in policy can lose revenue even though the optimal value of
the signal cannot fall below climatology.}
\label{fig:decision-chain}
\end{figure*}

\section{An Information Bound}

\subsection{Statement}

\begin{assumption}[Sub-Gaussian prices]\label{as:subgauss}
There exists $s>0$ such that, for every calendar context $c$, every
$1$-Lipschitz function $f$ on $(\mathbb R^n,\|\cdot\|_\infty)$ and every
$\lambda\in\mathbb R$,
$\mathbb E[\exp(\lambda(f(\Pi)-\mathbb E[f(\Pi)\mid C=c]))\mid C=c]
\le\exp(\lambda^2s^2/2)$.
\end{assumption}

Assumption~\ref{as:subgauss} is not restrictive in the sense that it always
holds for bounded prices: if every price lies in an interval of width $\Omega$,
Hoeffding's lemma gives Assumption~\ref{as:subgauss} with $s=\Omega/2$
\cite{hoeffding1963}. It is restrictive in the sense that the useful values of
$s$ are much smaller than $\Omega/2$, and establishing them for a real price law
is an empirical matter that we treat as such.

\begin{theorem}[Euros per nat]\label{th:bound}
Under Assumptions~\ref{as:polytope} and \ref{as:subgauss}, for every signal $S$,
\begin{equation}
  0\ \le\ V(S\mid C)-V_0\ \le\ \min\Bigl\{\,V_\star-V_0,\ \ L\,s\,\sqrt{2\,I(S;\Pi\mid C)}\,\Bigr\},
  \label{eq:bound}
\end{equation}
with $I$ measured in nats.
\end{theorem}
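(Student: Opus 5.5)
The lower bound $V(S\mid C)\ge V_0$ and the cap $V(S\mid C)\le V_\star$ both follow from conditional Jensen, since $h$ is convex. Write $M_S=\mathbb E[\Pi\mid S,C]$ and $M_0=\mathbb E[\Pi\mid C]$. Then $M_0=\mathbb E[M_S\mid C]$, so $\mathbb E[h(M_S)\mid C]\ge h(M_0)$. Likewise $M_S=\mathbb E[\Pi\mid S,C]$ gives $h(M_S)\le\mathbb E[h(\Pi)\mid S,C]$. Taking expectations yields $V_0\le V(S\mid C)\le V_\star$, and hence $V(S\mid C)-V_0\le V_\star-V_0$. The substance of the theorem is therefore the information term $Ls\sqrt{2I}$, which I would prove by a transport–entropy (Donsker–Varadhan) argument applied conditionally on $C=c$ and $S=\sigma$.

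Fix $c$ and a value $\sigma$ of the signal. Let $x^\ast=x^\ast(\sigma,c)\in X$ be a maximiser of $\langle x,M_S\rangle$; it exists by compactness in Assumption~\ref{as:polytope}. Then
\[
h(M_S)-h(M_0)\le\langle x^\ast,M_S-M_0\rangle
=\mathbb E_{P_\sigma}[f_\sigma(\Pi)]-\mathbb E_{P_c}[f_\sigma(\Pi)],
\]
where $f_\sigma(p)=\langle x^\ast,p\rangle/L$. Here $P_\sigma$ denotes the law of $\Pi$ given $(S=\sigma,C=c)$ and $P_c$ the law given $C=c$. By Hölder, $|\langle x,p-q\rangle|\le\|x\|_1\|p-q\|_\infty\le L\|p-q\|_\infty$, so $f_\sigma$ is $1$-Lipschitz for $\|\cdot\|_\infty$, and Assumption~\ref{as:subgauss} applies to it under $P_c$. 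The Donsker–Varadhan variational formula then gives, for every $\lambda>0$,
\[
\lambda\bigl(\mathbb E_{P_\sigma}f_\sigma-\mathbb E_{P_c}f_\sigma\bigr)
\le D(P_\sigma\|P_c)+\log\mathbb E_{P_c}e^{\lambda(f_\sigma-\mathbb E_{P_c}f_\sigma)}
\le D(P_\sigma\|P_c)+\lambda^2s^2/2 .
\]
Optimising over $\lambda$ gives $h(M_S)-h(M_0)\le Ls\sqrt{2D(P_\sigma\|P_c)}$.

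To finish, I average over $(S,C)$ and use concavity of the square root (Jensen):
\[
V(S\mid C)-V_0\le Ls\,\mathbb E\sqrt{2D(P_{S}\|P_C)}\le Ls\sqrt{2\,\mathbb E D(P_S\|P_C)}=Ls\sqrt{2I(S;\Pi\mid C)},
\]
since the averaged conditional divergence is exactly the conditional mutual information in nats. If $I=\infty$, the bound is trivial and the minimum reduces to $V_\star-V_0$.

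The main obstacle is the step where the test function $f_\sigma$ depends on $\sigma$. This is legitimate only because the sub-Gaussian bound in Assumption~\ref{as:subgauss} holds uniformly over all $1$-Lipschitz $f$ under the reference law $P_c$, which does not depend on $\sigma$, and because Donsker–Varadhan is applied separately for each fixed $\sigma$. Two further points need care. First, measurability of the selection $x^\ast(\sigma,c)$, which I would handle with a measurable-selection argument or by working directly with a subgradient of $h$. Second, the regular conditional distributions must be well defined, which holds because $\Pi$ lives in $\mathbb R^n$.
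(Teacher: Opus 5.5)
Your proof is correct, and its outer structure is the paper's: conditional Jensen for $V_0\le V(S\mid C)\le V_\star$, a pointwise entropy bound at each $(\sigma,c)$, then concavity of the square root and $\mathbb E\,\mathrm{KL}=I(S;\Pi\mid C)$. You differ in the middle step.

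The paper uses the two-sided bound $|h(p)-h(q)|\le L\|p-q\|_\infty$. It then controls each coordinate of $\mathbb E_{P_\sigma}\Pi-\mathbb E_{P_c}\Pi$ by $W_1(P_\sigma,P_c)$ through Kantorovich duality, and cites the Bobkov--G\"otze characterisation for $W_1\le s\sqrt{2\,\mathrm{KL}}$. You instead take a maximiser $x^\ast$ at the posterior mean, which is a subgradient of $h$ there. This reduces the gap to the mean shift of one linear test function, to which you apply Donsker--Varadhan directly. In effect you prove inline, for the only function that matters, the direction of Bobkov--G\"otze that the paper invokes.

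Your route buys self-containedness and a weaker hypothesis. Because you never pass through $W_1$, you use Assumption~\ref{as:subgauss} only for $p\mapsto\langle x,p\rangle/L$ with $x\in X$. By H\"older's inequality on the moment generating function, such a bound already follows from conditional sub-Gaussianity of each coordinate $\pi_t$ with parameter $s$, whatever the dependence across intervals. That bears on the joint-versus-marginal caveat of Section~\ref{sec:limits}. The paper's route buys modularity instead: its intermediate bound concerns the price law alone and rests on a named inequality.

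You can drop your measurable-selection concern. $x^\ast(\sigma,c)$ only serves to establish the pointwise inequality $h(M_S)-h(M_0)\le Ls\sqrt{2\,\mathrm{KL}(P_\sigma\|P_c)}$, and both sides of it are measurable in $(\sigma,c)$ without any reference to $x^\ast$.
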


\begin{proof}[Proof sketch]
The left inequality and the first term of the minimum are Jensen. For the second,
$h$ is $L$-Lipschitz for $\|\cdot\|_\infty$, so
$V(S\mid C)-V_0\le L\,\mathbb E\|\mathbb E[\Pi\mid S,C]-\mathbb E[\Pi\mid C]\|_\infty$. Each
coordinate of that difference is the gap between the means of the same
$1$-Lipschitz coordinate map under the posterior and under the prior, hence at
most the Wasserstein-1 distance between them for the $\|\cdot\|_\infty$ metric.
Assumption~\ref{as:subgauss} yields a transport-entropy inequality
\cite{bobkov1999,marton1996}, giving
$W_1(P_{\Pi\mid S=\sigma,C=c},P_{\Pi\mid C=c})\le s\sqrt{2\,\mathrm{KL}}$.
Averaging over $(S,C)$,
using the concavity of the square root, and using
$\mathbb E_{S,C}[\mathrm{KL}(P_{\Pi\mid S,C}\,\|\,P_{\Pi\mid C})]
=I(S;\Pi\mid C)$ completes the
argument. The full proof is in the appendix.
\end{proof}

\begin{corollary}[Information requirement]\label{co:required}
To capture a share $\gamma\in[0,1]$ of the uplift, a signal must carry at least
\begin{equation}
  I \ \ge\ \frac{\bigl(\gamma\,(V_\star-V_0)\bigr)^2}{2L^2s^2}
  \label{eq:required}
\end{equation}
nats. In particular, a claimed revenue improvement implies a lower bound on the
information content of the forecast that produced it, and a forecast that
demonstrably cannot carry that much information cannot have produced that
improvement.
\end{corollary}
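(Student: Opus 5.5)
The corollary is a direct inversion of the second term of the minimum in Theorem~\ref{th:bound}, so the plan is simply to chain the theorem with the definition of ``capturing a share $\gamma$''.

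First, I will fix what it means for a signal to capture a share $\gamma$ of the uplift. The natural reading, consistent with the rest of the paper, is that the value gain over climatology is at least that share:
\[
V(S\mid C)-V_0\ \ge\ \gamma\,(V_\star-V_0).
\]
The claim is about the signal, so this should be read for the optimal bidder. For an implemented policy $g$ the same hypothesis on $J(g,S)-V_0$ transfers immediately, because $J(g,S)\le V(S\mid C)$. That transfer is what justifies the ``claimed revenue improvement'' phrasing at the end of the corollary.

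Second, I will combine this with the upper inequality in \eqref{eq:bound}, which bounds the left side by $L\,s\,\sqrt{2\,I(S;\Pi\mid C)}$. This gives
\[
\gamma\,(V_\star-V_0)\ \le\ L\,s\,\sqrt{2I}.
\]
Both sides are nonnegative: $\gamma\ge0$ and $V_\star\ge V_0$ by Jensen. Squaring therefore preserves the inequality, and dividing by $2L^2s^2$ yields \eqref{eq:required}. The division requires $L>0$ and $s>0$. Here $s>0$ holds by Assumption~\ref{as:subgauss}. If $L=0$, every feasible schedule is zero, the uplift vanishes, and the only case that matters is $\gamma=0$, where the bound is trivially $I\ge0$; I will record this as the degenerate case.

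Third, the final sentence is the contrapositive. If the signal's information is strictly below the right side of \eqref{eq:required}, then $V(S\mid C)-V_0$, and hence any $J(g,S)-V_0$, is strictly below $\gamma(V_\star-V_0)$. So no policy using that signal can have produced the claimed gain.

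There is no real obstacle here; the only point needing care is the sign and degeneracy bookkeeping when squaring and dividing.
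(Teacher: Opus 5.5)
Your proposal is correct and matches the paper's route: the paper gives no separate proof, because the corollary is simply the second term of the minimum in Theorem~\ref{th:bound} inverted, squaring $\gamma(V_\star-V_0)\le Ls\sqrt{2I}$ (both sides nonnegative) and dividing by $2L^2s^2$. Your transfer to implemented policies via $J(g,S)\le V(S\mid C)$, and the contrapositive, are exactly what support the paper's ``claimed revenue improvement'' and falsification reading of the result.
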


\subsection{What the bound does and does not say}

Three readings are legitimate and one is not. The theorem gives a square-root
envelope: no signal with conditional information $I$ can exceed it. It does not
imply decreasing marginal value for an arbitrary family of signals. The
corollary is a falsification tool
that runs in the direction studies rarely test, from a claimed euro figure back
to an implied information requirement. And the bound is computable in advance:
$L$ is a property of the asset, $s$ of the price series, and neither requires
solving an optimisation.

What the bound does not say is that a signal carrying $I$ nats \emph{achieves}
anything. It is an upper limit on what information can buy, not a construction.
The rest of the paper is about the size of that gap, which turns out to be the
interesting quantity.

\subsection{An ordinal counterpart}\label{sec:ordinal}

Theorem~\ref{th:bound} compresses every kind of dependence into one scalar.
The decision instead uses specific distinctions between intervals. This
subsection makes that precise on a
relaxation of the problem, which is what justifies looking at an ordinal
statistic in Section~\ref{sec:results} rather than merely noticing that one
works.

Consider the relaxation $\tilde X\supseteq X$ obtained by keeping the per
interval power limits, the daily discharged-energy budget and the requirement
that charged and discharged energy balance through the efficiency, and dropping
the intra-day path constraint on the state of charge. Write $\eta$ for the
one-way efficiency, so that delivering one unit to the grid consumes
$1/\eta$ units of stored energy and storing one unit consumes $1/\eta$ units
bought.

\begin{proposition}[Ordinal sufficiency in the relaxation]\label{pr:ordinal}
On $\tilde X$, an optimal schedule is obtained by sorting the intervals by
price, pairing the dearest with the cheapest, the second dearest with the second
cheapest, and so on, and executing a pair at full power if and only if its net
margin $\pi_j-\pi_i/\eta^2$ is positive, until the energy budget is exhausted.
Consequently, any
two price vectors that induce the same ordering of the intervals and the same
set of profitable pairs admit a common optimal schedule, whatever their
magnitudes.
\end{proposition}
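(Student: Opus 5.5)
I will recast the relaxed problem as a linear program over interval-level charge and discharge amounts. Write $c_t\ge0$ for energy bought in interval $t$ and $d_t\ge0$ for energy delivered, so that $x_t=d_t-c_t$. On $\tilde X$ the constraints are $0\le c_t\le P$ and $0\le d_t\le P$ (the per-interval power limits), $\sum_t d_t\le B$ (the daily discharged-energy budget), and the balance $\sum_t c_t=\sum_t d_t/\eta^2$: each unit delivered draws $1/\eta$ from storage, and each stored unit costs $1/\eta$ bought. The objective is $\sum_t \pi_t d_t-\sum_t\pi_t c_t$. Because the intra-day state-of-charge path is dropped, time indices carry no structure beyond their prices. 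The problem is therefore a fractional transportation problem: it moves ``delivered energy'' mass into dear intervals and ``bought energy'' mass, scaled by $1/\eta^2$, into cheap ones. I will also assume the convention of the statement, namely that charge and discharge are not both used in one interval, so that each interval plays one role. In the relaxation this is harmless, since using both in the same interval has margin $\pi(1-1/\eta^2)\le0$ whenever $\pi\ge0$.

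The key step is an exchange argument. Fix the total delivered energy $E=\sum_t d_t$. The optimal $d$ is then a greedy fill of the $E$ units into the dearest intervals at full power. The optimal $c$ is a greedy fill of $E/\eta^2$ into the cheapest intervals at full power. Any deviation can be swapped toward a dearer (respectively cheaper) interval without loss. The value as a function of $E$ is therefore concave and piecewise linear. Its marginal at a given $E$ is $\pi_{(\text{next dearest})}-\pi_{(\text{next cheapest})}/\eta^2$, which is nonincreasing because the dear prices decrease and the cheap prices increase along the fill.

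Optimality then means increasing $E$ while this marginal is positive and while $E\le B$. Matching the $k$-th dearest slot with the $k$-th cheapest slot reproduces exactly the pairing rule of the statement. The pairs are executed while $\pi_j-\pi_i/\eta^2>0$ and until the budget is exhausted. When $\eta<1$ the charge side fills $1/\eta^2$ times faster than the discharge side, so pairs are matched in energy units rather than literally interval for interval. I will state the pairing through the sorted cumulative energy profiles, which gives literal pairs when $\eta=1$ or when the reading is per unit of power. I expect this bookkeeping to be the main obstacle. To handle it, I will first prove the concave-marginal form via LP duality: dual variables are a budget price $\mu\ge0$ and a balance price $\nu$, and complementary slackness gives thresholds of the form $\pi_t>\nu\eta^2+\mu$ for discharge and $\pi_t<\nu$ for charge, so only the order of prices and the sign of marginal margins matter. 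I will then read off the pairing.

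The consequence follows directly. The greedy schedule is a function only of the permutation that sorts the prices and of the set of pairs with positive margin. The budget cutoff depends on $P$ and $B$, not on prices. Two price vectors that share these data therefore produce the identical greedy schedule, and that schedule is optimal for each of them. Ties in price are handled by fixing any consistent tie-break, which the shared ordering hypothesis supplies.
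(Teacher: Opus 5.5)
Your main line is the paper's proof. Fixing the total traded mass splits the problem into two continuous knapsacks; the paper gets there through transport variables $m_{ij}$ whose coefficient $\pi_j-\pi_i/\eta^2$ is separable, so only the marginals matter, and those marginals are your $d_t$ and $c_t$. The value is then concave in the mass with non-increasing marginal, and one stops at its sign change or at the budget. Your remark that for $\eta<1$ the matching is by cumulative energy rather than interval for interval is right. It is what the paper's ``stretched by the interval capacity'' encodes: capacity $P\Delta t$ on the discharge side and $\eta^2P\Delta t$, in delivered units, on the charge side. The LP-duality detour is therefore not needed for the bookkeeping.

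Two points need fixing. \textbf{The no-simultaneity convention.} It is not part of the paper's relaxation, whose transport programme allows $m_{ii}$, and your justification covers only $\pi\ge0$. The statement says ``whatever their magnitudes'', and negative prices are a case the paper's implementation must treat separately. At a negative price, simultaneous charge and discharge is strictly profitable on $\tilde X$. A one-interval day at $\pi<0$ earns $|\pi|\,d\,(1/\eta^2-1)>0$ with $d=\min(\eta^2P,B)$, against zero under your convention. So your schedule is then not optimal. Imposing exclusion also breaks the decoupling of the two fills on which your exchange argument rests. Drop the convention and let the two greedy fills be independent and possibly overlap, as the paper does. Your margin remark then shows that no overlap occurs when prices are nonnegative. \textbf{The dual sketch.} The discharge threshold should be $\pi_t>\nu/\eta^2+\mu$, not $\nu\eta^2+\mu$, since one delivered unit requires $1/\eta^2$ units bought. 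As written, the discharge threshold can fall below the charge threshold $\nu$, which contradicts the non-overlap you want.
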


\begin{proof}[Proof sketch]
Any feasible schedule on $\tilde X$ decomposes into a transport of energy from
charging to discharging intervals. Writing $m_{ij}$ for the grid-side energy
bought in $i$ and sold in $j$, the objective becomes
$\sum_{ij}m_{ij}(\pi_j-\pi_i/\eta^2)$, whose coefficient is \emph{separable} in
$i$ and $j$. The problem therefore splits: at a given total traded mass, sell in
the dearest intervals and buy in the cheapest, and increase the mass while the
marginal pair satisfies $\pi_j>\pi_i/\eta^2$. Both operations read the price
vector only through its ordering and through one ratio test. Appendix~B gives
the argument in full.
\end{proof}

\begin{corollary}[Rank sufficiency at fixed spread]\label{co:ordinal}
If a signal induces the correct ordering and the correct set of profitable
pairs, the bidder attains $V_\star$ on $\tilde X$ regardless of how wrong its
price levels are. An inversion can reverse a scheduled trade; its realised loss
then scales with the traded mass, efficiency and true spread, not with mean
squared error alone.
\end{corollary}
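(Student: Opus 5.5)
The plan is to derive Corollary~\ref{co:ordinal} directly from Proposition~\ref{pr:ordinal}, which we take as given. The argument has two parts: an exactness part for the first sentence and a perturbation part for the second.

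\emph{Exactness.} Let the signal induce a price vector $\hat p$ whose interval ordering and set of profitable pairs, judged by the test $\hat p_j-\hat p_i/\eta^2>0$, coincide with those of the realised $\pi$.

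\begin{enumerate}
\item The greedy rule of Proposition~\ref{pr:ordinal} sorts by price, pairs the dearest interval with the cheapest, and so on, and stops at the first unprofitable pair or when the budget is exhausted.
\item Applied to $\hat p$, this rule reads only the ordering and the profitable-pair set. It therefore returns exactly the schedule $\hat x$ it would return on $\pi$.
\item By the proposition, $\hat x$ is optimal on $\tilde X$ for $\pi$, so $\langle \hat x,\pi\rangle=\tilde h(\pi)$, where $\tilde h$ is the support function of $\tilde X$.
\item This holds pathwise. Taking expectations shows the bidder attains $\mathbb E[\tilde h(\Pi)]$, which is $V_\star$ on the relaxation, whatever $\|\hat p-\pi\|$ is.
\end{enumerate}

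Ties in the ordering are handled by a fixed tie-breaking convention. Tied intervals are interchangeable in the objective, so the choice does not change the attained value.

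\emph{Loss from an inversion.} I will use the transport decomposition from the proof sketch of Proposition~\ref{pr:ordinal}. Any schedule on $\tilde X$ earns $\sum_{ij} m_{ij}(\pi_j-\pi_i/\eta^2)$.

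\begin{enumerate}
\item Suppose the signal swaps the ranks of a cheap interval $a$ and a dear interval $b$, so the bidder schedules a mass $m$ bought in $b$ and sold in $a$ instead of bought in $a$ and sold in $b$.
\item The realised contribution of that pair changes from $m(\pi_b-\pi_a/\eta^2)$ to $m(\pi_a-\pi_b/\eta^2)$.
\item The loss is therefore
\[
m\Bigl(\pi_b-\pi_a/\eta^2-\pi_a+\pi_b/\eta^2\Bigr)=m\,(1+\eta^{-2})(\pi_b-\pi_a).
\]
This depends only on the traded mass $m$, the efficiency and the true spread $\pi_b-\pi_a$.
\item The loss does not depend on how far $\hat p$ is from $\pi$ in squared error. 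A forecast can have arbitrarily small error $\hat p-\pi$ near a tie and still trigger the swap, while one with a huge level bias but the correct ordering loses nothing by the exactness part.
\item The phrase ``not with mean squared error alone'' will be justified by exhibiting these two examples: small error with a large loss, and large error with zero loss.
\end{enumerate}

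\emph{Main obstacle.} The delicate point is making ``an inversion'' precise when one misordering changes which pairs are matched, which can cascade through the pairing. The first approach is to restrict to a single transposition of two entries that leaves all other pairings intact, as in the swap experiment. If cascades do occur, the loss should instead be bounded by $\tilde h(\pi)-\langle\hat x,\pi\rangle\le \sum$ of per-pair reversals of the form above. This bound uses the same separable transport objective together with the budget constraint $\sum m_{ij}\le$ the discharged-energy budget.
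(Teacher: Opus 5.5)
Your route is the paper's. The first sentence comes from Proposition~\ref{pr:ordinal}, because the greedy rule reads $\hat p$ only through its ordering and its profitable-pair set. The second comes from evaluating the true objective $\sum_{ij}m_{ij}(\pi_j-\pi_i/\eta^2)$ at the signal-induced schedule and charging each mismatched transfer its true coefficient gap; for a reversed transfer that gap is your $(1+\eta^{-2})(\pi_b-\pi_a)$. Your cascade fallback is the paper's bound, which likewise claims no equality.

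\textbf{The step that fails} is the example you intend for ``small error, large loss''. Inverting $a$ and $b$ requires $|\hat p_a-\pi_a|+|\hat p_b-\pi_b|\ge\pi_b-\pi_a$. So a swap triggered near a tie of the true prices costs, by your own formula, at most $m(1+\eta^{-2})$ times the error on those two coordinates, which is almost nothing. More generally, let $\tilde L=\max_{x\in\tilde X}\|x\|_1$ and let $\hat x$ be optimal for $\hat p$. Then $\langle\hat x,\pi\rangle\ge\tilde h(\hat p)-\tilde L\|\hat p-\pi\|_\infty\ge\tilde h(\pi)-2\tilde L\|\hat p-\pi\|_\infty$, so no error that is small in every coordinate can cause a large loss.

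What decouples the loss from mean squared error is the gap between the supremum norm and the per-interval average: put the error on the decisive coordinates. Exchanging the cheapest and dearest entries gives mean squared error $2(\pi_b-\pi_a)^2/n$, diluted by the number of intervals, while the loss is proportional to the full spread. A rescaling $\hat p=k\pi$ with $k>0$, tuned to the same mean squared error, preserves the ordering and the sign of every margin $k(\pi_j-\pi_i/\eta^2)$. By your exactness part it therefore loses nothing. Two forecasts with equal mean squared error and different losses are what ``not with mean squared error alone'' requires.

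There are two smaller corrections. First, your ``huge level bias'' must also preserve the profitable-pair set. An additive shift $\hat p=\pi+c\mathbf 1$ keeps the ordering but moves every net margin by $c(1-\eta^{-2})$. For large $|c|$ this switches profitable trades off or unprofitable ones on, at a real cost.

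Second, the caps of Appendix~B are asymmetric: at most $P\Delta t$ per discharge interval but $\eta^2P\Delta t$ per charge interval in $m$-units. Under these caps, exchanging the extremes does not reverse a single entry. The discharge at $b$ also draws $(1-\eta^2)P\Delta t$ from the second-cheapest interval, and that transfer is redirected to $a$ rather than reversed.

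Work with marginals instead. The objective depends on $m$ only through $D$ and $C$. A transposition of two entries leaves the sorted values, hence the traded mass, unchanged and merely relabels the greedy marginals. The loss is therefore exactly $(\pi_b-\pi_a)\bigl[(D_b-D_a)+\eta^{-2}(C_a-C_b)\bigr]$. For the extremes at full power this is $2P\Delta t(\pi_b-\pi_a)$ rather than $(1+\eta^2)P\Delta t(\pi_b-\pi_a)$. It is still of the form the statement asserts, and it settles your ``main obstacle'' for single transpositions without any cascade argument.
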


The corollary motivates the empirical section. Cardinal accuracy can coexist
with a damaging ordering, while a badly calibrated signal can preserve the
decisive order. Mutual information distinguishes signals, but its scalar amount
does not identify which price comparisons they preserve. Section~\ref{sec:results}
therefore compares an ordinal diagnostic with an information-content bound when
explaining the revenue of specified plug-in policies.

Proposition~\ref{pr:ordinal} is stated on $\tilde X$ and not on $X$. The
intra-day state of charge path can bind, in which case the greedy pairing is
infeasible and the optimal schedule stops being a pure function of the ordering.
All numerical results use the full $X$, so the proposition is an explanation of
the mechanism and not an assumption of the experiment.

\section{Experimental Design}

The protocol below was written before any result was produced, together with the
falsification clauses of Section~\ref{sec:falsify}.

\subsection{Data and asset}

French day-ahead prices from the Fraunhofer ISE Energy-Charts platform
\cite{energycharts2026} give \NDays{} complete local days between January 2024
and July 2026, hourly until the Single Day-Ahead Coupling transition and
quarter-hourly thereafter. The five clock-change days of that window are dropped
rather than special-cased, because they carry nothing on this question and would
break the per-interval information accounting. The asset is a 1~MW, 2~MWh battery at 85\%
round-trip efficiency, capped at 1.5 equivalent cycles per day, returning to 50\%
state of charge at the start and end of each day.

The prior is the mean price by month and hour, estimated leave-one-out so that
the evaluated settlement price enters neither its mean nor its variance. It explains \ExplainedVar\% of
the variance of the price series, leaving a residual standard deviation of
\ResidualSd~\eur{}/MWh against \PriceSd~\eur{}/MWh for the raw price. Measuring
information against the residual rather than the raw variance is essential: a
signal must not be credited for the diurnal shape that the prior already
contains.

\subsection{Signals and their information content}

The signal is the true price plus independent Gaussian noise,
$S=\Pi+\varepsilon$ with $\varepsilon\sim N(0,\sigma^2 I_n)$, and $\sigma$ is
swept from $1$ to \TailSigma~\eur{}/MWh over eleven levels with three
independent seeds each. This family is chosen for one reason: its information
content admits a closed-form upper bound. The channel is memoryless, so the
mutual information of a day is at most the sum of the per-interval mutual
informations, and each of those is at most the Gaussian channel capacity at that
interval's residual variance, because the Gaussian source maximises it at fixed
variance \cite{cover2006}. Using an upper bound on $I$ on the right-hand side of
\eqref{eq:bound} keeps the inequality valid, so the falsification test remains
sound.

We stress what this family is not. It is a designed experiment, not a model of
real forecast error, which is biased, heteroscedastic and serially correlated.
It buys computability of $I$ at the price of realism, and
Section~\ref{sec:limits} states the cost.

\subsection{Three decision rules}

Equation~\eqref{eq:value} says the posterior mean is sufficient, so the only
question is how a bidder computes it. Three rules span the practice:
\begin{itemize}
\item \emph{naive}: bid the signal, which is what using a forecast as a price
      normally means;
\item \emph{posterior mean, one variance}: linear shrinkage towards the
      climatology with a weight equal to the signal-to-noise ratio computed from
      a single pooled residual variance;
\item \emph{posterior mean, variance by month and hour}: the same with a
      residual variance that depends on month and hour, which is the
      specification the data supports.
\end{itemize}
The last two are Bayes rules under a Gaussian working model. That model is
misspecified, and the comparison between the three is precisely a measurement of
what misspecification costs, since sufficiency of the posterior mean is a
statement about the true posterior and confers nothing on an approximate one.

Every schedule, whatever the rule, is settled at the true prices. In total
\NCases{} backtests of \NDays{} days each are solved.

\section{Results}\label{sec:results}

\subsection{The climatological bidder captures most of the value}

The perfect-foresight bidder earns \StarRevenue{} k\eur{} per megawatt-year of
pure day-ahead arbitrage. The climatological bidder, who knows only the average
price for this month and this hour, earns \BaseRevenue{}, or \BaseShare\% of it.
The uplift available to any forecast of today's prices is therefore
\UpliftRevenue{} k\eur{} per megawatt-year, \UpliftDay~\eur{} per day.

This is not a small correction to the framing; it is the framing. Four fifths of
the arbitrage value of this asset is in the average diurnal shape of the price,
which requires no forecast at all, and the whole of the price-forecasting
literature is competing for the remaining fifth.

\subsection{The bound is numerically uninformative here}

On the empirical distribution, the observed price span
$\Omega=\PriceSpan$~\eur{}/MWh gives the rigorous constant
$s=\Omega/2=\HoeffdingS$~\eur{}/MWh. Corollary~\ref{co:required} then requires
only $\RequiredNatsHoeffding\times10^{-5}$ nats per day to permit the whole
uplift. The pooled residual estimate $s=\EmpiricalS$~\eur{}/MWh would raise that
threshold to $\RequiredNatsEmpirical\times10^{-3}$ nats, but it is not a
joint-law certificate and is reported only as a sensitivity. At the reference
noise level, the Gaussian-capacity upper bound on information is \NatsAtRef{}
nats, and even the ceiling built on that smaller estimated constant is
\SlackAtRef{} times the uplift; the rigorous Hoeffding constant puts it four
times higher still. The
screen therefore cannot discriminate between the tested policies. This is a
consistency check on their achieved revenues, not a direct measurement of the
optimal value $V(S\mid C)$ or a guarantee for future prices.

\subsection{Plug-in policy revenue is not monotone in information}

Fig.~\ref{fig:capture} reports the share of the
uplift each rule captures. Above a noise level of about half the
residual standard deviation, all three rules destroy revenue: they earn less than
the bidder who ignores the signal. At $\sigma=\RefSigma$~\eur{}/MWh, close to
the residual standard deviation of \ResidualSd, the naive rule leaves the asset
at \NaiveAtRef\% of the uplift and the best Bayes rule at \BayeshAtRef\%, both
far below the zero that ignoring the signal would guarantee.

\begin{figure}[t]
\centering
\includegraphics[width=\columnwidth]{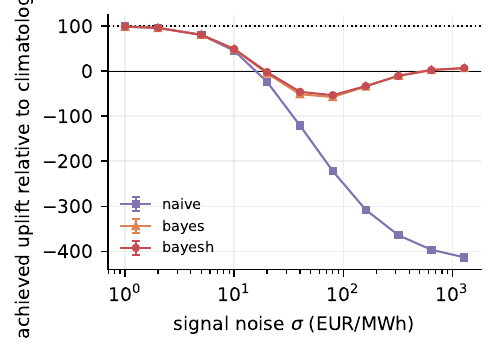}
\caption{Share of the uplift captured against signal noise, three rules, error
bars over three seeds. All three rules cross zero between
$\sigma=\NaiveBreakEven$ and $\sigma=\AllNegativeSigma$~\eur{}/MWh. The two
shrinkage rules reach a minimum at $\sigma=\WorstSigma$ and then recover towards
zero as the shrinkage weight vanishes; the naive rule, which never shrinks,
keeps losing.}
\label{fig:capture}
\end{figure}

Two features of this curve deserve attention. It is not monotone in $\sigma$,
and therefore not monotone in $I$: the worst outcome is in the middle of the
range, not at either end. And the recovery at large $\sigma$ is the
pre-registered convergence check: at $\sigma=\TailSigma$ the shrinkage weight is
near zero and the Bayes rules return to \TailBayesh\% of the uplift, that is, to
the climatological bidder. The naive rule does not recover, because it never
shrinks.

Both Gaussian-shrinkage rules dominate the naive rule at every tested noise
level. This is an implementation diagnostic, not a theorem under the
misspecified Gaussian working model.
Their advantage is large: at $\sigma=\RefSigma$ shrinkage recovers more than
half of the naive rule's loss. Refining the variance from a single pooled value
to one per month and hour adds little, moving \BayesAtRef\% to \BayeshAtRef\%.
The cost of misspecification is therefore concentrated in something other than
the variance profile.

\subsection{The decision is ordinal}

Equation~\eqref{eq:value} tells us where to look. A price-taking storage asset
with a cycle budget does not care how expensive the evening is; it cares which
intervals are the cheapest and which the dearest, with magnitudes entering only
through the threshold that the round-trip efficiency imposes on a profitable
spread. Mutual information aggregates dependence into a scalar: two signals
with identical $I$ can preserve different price comparisons and therefore
induce different schedules.

Fig.~\ref{fig:rank} replaces the horizontal axis by the mean daily rank
agreement between the bid vector and the true price. The picture becomes almost
one-dimensional. Across all three rules and all eleven noise levels, rank
agreement explains \RankRsq{} of the in-sample variation in captured revenue,
while the logarithm of the information upper bound explains \NatsRsq. The climatological bidder sits
at a rank agreement of \ClimatologyRank. Within this designed sweep, every point
above that line gains revenue relative to climatology, and every point below it
loses revenue except the four largest-noise shrinkage points, whose weight on
the signal is small enough that they are indistinguishable from the
climatological bidder in rank agreement and sit within \TailBayesh\% of the
uplift of its revenue. The
separation is therefore between signals that order the day better than the
climatology and signals that order it materially worse; it is descriptive and
need not survive another signal family or sample.

\begin{figure}[t]
\centering
\includegraphics[width=\columnwidth]{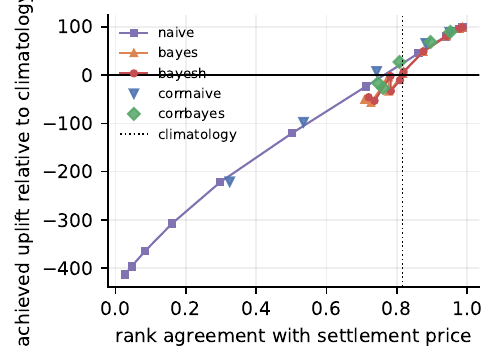}
\caption{Achieved policy revenue against mean daily rank agreement between the
bid vector and settlement price. The dotted line is climatology. The pooled
in-sample association is $R^2=\RankRsq{}$; it is descriptive, not a universal
deployment threshold.}
\label{fig:rank}
\end{figure}

This also explains the non-monotonicity. Adding unbiased Gaussian noise to the
price and shrinking it optimally in the mean squared sense produces an estimate
whose error is small but whose \emph{ordering} is worse than the prior's, because
the prior orders the day correctly on average while a perturbed prior orders a
specific day wrongly with confidence. The revenue is collected through the
ordering, so a small mean squared error bought at the cost of a corrupted
ordering is a bad trade. Mean squared error and revenue disagree not in
magnitude but in what they are functions of.

\subsection{Ordering alone, and one exchange of the extremes}

The previous subsection is a correlation. This one is the experiment that
Corollary~\ref{co:ordinal} predicts, and it is the sharpest evidence in the
paper because it separates ordering from level by construction rather than by
regression.

The \emph{ordinal} bidder receives a vector that has the correct ranking of the
day's intervals and nothing else: the magnitudes are replaced by an arbitrary
linear scale from zero to one hundred, identical on every day, so the vector
carries no cardinal information beyond the ranking: it does not reveal the
price level, spread, or whether trading is profitable. That bidder earns \OrdinalRevenue{}
k\eur{} per megawatt-year, which is \OrdinalCapture\% of the uplift. Knowing
only which hour is dearest and which is cheapest, and nothing about how much,
recovers nine tenths of everything perfect foresight is worth over climatology.

The \emph{swapped} bidder is the mirror image. It receives the exact settlement
prices, correct to the cent in every interval, with one modification: the
cheapest and the dearest entries of the day are exchanged. This swaps two
entries but reverses their pairwise order against every intermediate entry;
every other number is exact. That bidder earns \SwapRevenue{}
k\eur{} per megawatt-year. It does not merely forgo the uplift, which would put
it at zero capture; it captures \SwapCapture\% of the uplift and loses
\SwapLossShare\% of the total revenue, falling far below the bidder that ignores
prices and follows the climatology.

The two experiments bracket the thesis. A signal carrying no level information
at all but the correct ordering is worth almost the whole uplift, while a signal
whose levels are exact to the cent can destroy more than half the revenue by
exchanging two entries. This is a sensitivity test of
the decision rule, not a comparison of the Blackwell value of the signals: a
fully informed bidder that knew the transformation could invert it.

\subsection{Errors that look like real forecast errors}

Independent Gaussian noise is the wrong model of a forecast error, and it was
chosen only because its information content is computable. This subsection drops
that convenience and keeps the ordinal question, which is the trade the paper is
ultimately making.

We repeat the sweep with an error that is serially correlated within the day,
following a first-order autoregression with coefficient $0.8$, and
heteroscedastic, scaled interval by interval by the square root of that hour's
own residual variance. Both features are characteristic of published price
forecasts and neither is representable in the memoryless channel of
Section~V-B. In consequence the mutual information of these signals has no
closed form and we do not report one: their information coordinate is simply
unavailable. That is not a gap in the
experiment; it is the practical asymmetry between the two candidate currencies.
An ordinal statistic can be computed for any signal whatsoever, including a real
forecast, whereas mutual information can be computed only for signals we
ourselves constructed.

The measurements themselves are more forgiving and qualitatively identical. At
the reference noise level the correlated naive bidder captures
\CorrNaiveAtRef\% of the uplift against \NaiveAtRef\% under independent noise,
and the correlated posterior-mean bidder \CorrBayesAtRef\% against
\BayeshAtRef\%. Break-even moves from a root mean squared error of
\NaiveBreakEven{} to \CorrBreakEven~\eur{}/MWh. Realistic error structure
therefore roughly halves the damage without changing its sign: a forecast whose
error is comparable to the climatological residual still leaves this asset worse
off than no forecast at all.

The decisive observation is where these \CorrPoints{} additional points land.
Added to Fig.~\ref{fig:rank}, they fall on the same curve as
the independent-noise points, and the ordinal statistic's explanatory power over
the pooled set is \CorrRankRsq, unchanged from \RankRsq{} on the original set.
A statistic that organises signals of two different noise structures, one of
which has no tractable information content, is doing the work that mutual
information was supposed to do.

\subsection{Robustness to storage duration}

Every result above is for a two-hour asset. Repeating the central cases on a
four-hour asset of the same power changes the levels and not the conclusions.
The climatological bidder earns \LongBase{} k\eur{} per megawatt-year against
\LongStar{} for perfect foresight, so its share is \LongBaseShare\%, against
\BaseShare\% at two hours. The ordinal bidder captures
\LongOrdinalCapture\% of the uplift, against \OrdinalCapture\% at two hours, and
the swapped bidder captures \LongSwapCapture\%. A longer asset trades in more
intervals per day, so exchanging two extreme entries matters proportionally
less. That is the direction the spread term of Corollary~\ref{co:ordinal} points
in, and a useful consistency check on the mechanism, though the corollary bounds
the loss of an inversion rather than deriving its dependence on duration.

\subsection{Why the losses are so large}

The magnitude of the destruction in Fig.~\ref{fig:capture} deserves a
mechanism, because a reader's first instinct is that a bidder acting on a noisy
forecast should earn something between the climatological and the perfect
value, not less than either.

The instinct fails because the loss is not symmetric in the error. A bidder that
misidentifies the cheapest interval does not simply fail to buy cheaply; it
buys expensively, and it then sells the same energy at a price it also
misidentified. The two errors compound within the same trade, and the daily
cycle budget encourages large trades whenever the forecasted spread is positive.
Nothing in the formulation represents confidence, so a wrong ordering can be
executed at the same power as a correct one.

This also explains why shrinkage helps so much and yet not enough. Shrinking the
signal towards the climatology moves the bid vector's ordering back towards the
prior's ordering, which is correct on average, and that recovers more than half
of the naive rule's loss at $\sigma=\RefSigma$. It cannot recover the rest,
because linear shrinkage acts on levels and the damage is in the ordering: a
shrunk vector can preserve an inversion perfectly well. An uncertainty-aware
formulation that reduced traded volume when the ordering is doubtful would
address the residual, and that is a different paper.

\subsection{Falsification clauses}\label{sec:falsify}

Four diagnostics were pre-registered and execute as assertions. Every achieved
gain lies below the theorem's ceiling; no plug-in policy beats perfect
foresight; both Gaussian-shrinkage rules beat the naive rule on the tested grid;
and they converge to climatology as $\sigma$ grows. The first two follow from
the model if implementation and assumptions align. The latter two test the
chosen working rules and are not general consequences of Bayesian sufficiency.

\section{Implications}

\subsection{For forecast evaluation}

A forecast intended for a storage bidder should be evaluated through realised
policy revenue, with the ordering it induces reported alongside price error.
This follows Gneiting's principle
that a point forecast is meaningless without its loss functional
\cite{gneiting2011}, with the uncomfortable corollary that the functional here is
ordinal and no standard score targets it. The immediate practical test is a
comparison, not a universal threshold: in this experiment, no signal added
revenue while ordering the day materially worse than the climatology does,
measured at \ClimatologyRank{} on this market and this sample. This observed
crossing is a diagnostic to validate out of sample, not a sufficient deployment
rule.

\subsection{A deployment threshold in familiar units}

The garbling parameter $\sigma$ is, by construction, the root mean squared error
of the signal, which makes one result directly comparable with the way forecasts
are actually reported. All three rules cross from creating to destroying revenue
between $\sigma=\NaiveBreakEven$ and $\sigma=\AllNegativeSigma$~\eur{}/MWh, against a
climatological root mean squared error of \ResidualSd~\eur{}/MWh on the same
sample.

The threshold is therefore not "beat the climatology" but "beat it by a factor
of two to four", and the gap between those two statements is the whole of the
non-monotonicity described above. A forecast that improves on the climatological
error by twenty percent still leaves this asset worse off than using no forecast
at all. We state the number with the caveat it deserves: it is specific to
Gaussian, unbiased, serially independent errors, and a real forecast with the
same root mean squared error but a better ordering would sit elsewhere. That is
precisely why the threshold should be read on the ordinal axis, and why the next
subsection proposes a statistic for it.

\subsection{A score to use instead}

The recommendation is only useful if it names a computable object, so we name
one. Let $\hat p$ be a forecast for a day and $\pi$ the settlement price. Define
the \emph{spread-weighted rank score}
\begin{equation}
  \mathrm{SR}(\hat p,\pi)=
  \frac{\sum_{i<j}|\pi_j-\pi_i|\,
  \operatorname{sgn}(\hat p_j-\hat p_i)\operatorname{sgn}(\pi_j-\pi_i)}
       {\sum_{i<j}\,|\pi_j-\pi_i|},
  \label{eq:score}
\end{equation}
where the sums run over unordered pairs and ties contribute zero. It equals one
when the forecast orders every pair correctly, minus one when it inverts every
pair, and has expectation zero for an independent random ordering. It differs from a plain
rank correlation in one respect that matters: each pair is weighted by the money
at stake in it, which is aligned with the spread term in
Corollary~\ref{co:ordinal}. It is computable from a forecast
and a settlement series with no optimisation, and it inherits the property that
motivated this paper, namely that it is invariant to any strictly increasing
transformation of the forecast and therefore cannot be improved by calibrating
levels.

We do not claim \eqref{eq:score} is proper in the sense of
\cite{gneiting2007}, and establishing whether a proper score exists for this
functional is open. The score is therefore a candidate diagnostic, not a
validated surrogate for revenue: it should be reported beside a revenue
backtest and compared with the climatological score on the same sample.
Establishing out-of-sample thresholds is left to future work.

\subsection{For valuation studies}

Perfect-foresight backtests are standard and are known to be optimistic. This
paper quantifies the structure of that optimism in a way that
separates two very different components. Four fifths of the arbitrage revenue of
this asset survives the complete removal of information about today. Studies that
report the perfect-foresight number and studies that report a realistic one
differ by at most a fifth of the revenue, not by the order of magnitude the
framing sometimes suggests, and the difference is concentrated entirely in the
ordering of the evening ramp.

\subsection{For the economics of forecasting}

Corollary~\ref{co:required} is a screening tool. A vendor claiming that a
forecast adds a given number of euros per megawatt-year implies, through
\eqref{eq:required}, a minimum information content. On this market the implied
requirement is so small that the test never rejects. In this experiment, the
binding constraint is the ordering preserved by the signal and the plug-in rule
that consumes it, not the theorem's information envelope. That is an unusual conclusion
for an information-theoretic exercise, and it is the reason we report the
negative direction of the result as prominently as the theorem.

\section{Limitations}\label{sec:limits}

The Gaussian garbling is a designed experiment and not a model of forecast
error. Real forecast errors are biased, heteroscedastic, serially correlated,
and correlated with the level of the price. Section~VI-F relaxes the second and
third of these and finds the damage roughly halved and the ordinal collapse
intact, but bias and price-dependence remain untested and we cannot sign their
effect. What survives without qualification is the theorem, which holds for any
signal, and Proposition~\ref{pr:ordinal}, which depends on the structure of the
decision rather than on the noise family.

The information figures are upper bounds, not values. They use the Gaussian
channel capacity at the residual variance, which the empirical source need not attain, so the
reported information coordinates are conservative in the direction that makes
the bound easier to satisfy. The falsification test is therefore
weaker than it would be with exact information, and we state it as a consistency
check rather than as a proof.

Assumption~\ref{as:subgauss} concerns the joint law of the day's prices under the
supremum metric, while the estimate of $s$ we report is computed from the pooled
marginal residual. The two coincide when the daily residual has independent
coordinates and otherwise differ in a direction we cannot sign. This is why every
falsification test uses the unconditional Hoeffding constant, for which
Assumption~\ref{as:subgauss} is a theorem, and the estimated constant appears
only as an indication of how much sharper the bound could be.

Finally, the study covers day-ahead arbitrage for one asset on one market over
thirty-one months, with perfect knowledge of the settlement prices used for
evaluation. Reserve markets, intraday trading, degradation, imbalance settlement
and portfolio effects are all absent, and the climatological prior would be
weaker on a market with less regular diurnal structure than France.

\section{Conclusion}

The optimal value of a price signal to a storage asset lies below a square-root
envelope in its conditional mutual information, with a constant fixed by the
asset's throughput and the price law. On French day-ahead prices this ceiling is
too loose to discriminate between the bidding policies tested here.

What is scarce is ordering. The schedule is decided by which intervals are
cheapest and dearest, distinctions that a scalar amount of mutual information
does not identify and that mean squared error does not protect. We prove ordinal sufficiency on a
relaxation of the problem and then demonstrate it in the sharpest form the data
allows: a signal carrying no cardinal information beyond the ranking
captures \OrdinalCapture\% of the uplift, while a signal that is exact to the
cent except for exchanging the cheapest and dearest entries costs
\SwapLossShare\% of total revenue under the plug-in rule. At the reference noise
level, even Gaussian shrinkage underperforms climatology. Within the designed
sweep, one ordinal statistic has in-sample $R^2=\RankRsq{}$, against
\NatsRsq{} for the information upper bound.

Three questions follow and we leave them open. Whether a proper scoring rule
exists for an ordinal functional of this kind is the first, and it decides
whether \eqref{eq:score} can be more than a diagnostic. Whether the ordinal
sufficiency of Proposition~\ref{pr:ordinal} extends to the full feasible set
once the intra-day state of charge path binds is the second, and it is a
question about how often that constraint is active rather than about whether it
can be. The third is empirical and is the natural continuation: replacing the
designed Gaussian garbling by the errors of published forecasting models
\cite{lago2021} would say where real forecasts sit on
Fig.~\ref{fig:rank}, which is the only place this paper's recommendation can be
tested against practice.

Two practical rules follow. Evaluate forecast-driven schedules against the
climatological policy, and use ordering diagnostics to explain failures rather
than treating them as a universal deployment threshold. When reporting the value of forecasting, report it against a
climatological bidder rather than against zero: on this asset and this market,
four fifths of the arbitrage revenue never depended on knowing anything about
today.

\appendices
\section{Proof of Theorem~\ref{th:bound}}

Fix a context $c$. Write $Q_c=P_{\Pi\mid C=c}$ for the conditional prior and
$P_{\varsigma,c}=P_{\Pi\mid S=\varsigma,C=c}$ for the posterior. By
\eqref{eq:value} and the definition of $V_0$,
\[
  V(S\mid C)-V_0=\mathbb E_{S,C}\bigl[
  h(\mathbb E_{P_{S,C}}\Pi)-h(\mathbb E_{Q_C}\Pi)\bigr].
\]
Since $h$ is the support function of $X$, for any $p,q$,
$|h(p)-h(q)|\le\max_{x\in X}|\langle x,p-q\rangle|
\le\bigl(\max_{x\in X}\|x\|_1\bigr)\|p-q\|_\infty=L\|p-q\|_\infty$, which is
\eqref{eq:throughput}. Hence
\[
  V(S\mid C)-V_0\le L\,\mathbb E_{S,C}\bigl\|
  \mathbb E_{P_{S,C}}\Pi-\mathbb E_{Q_C}\Pi\bigr\|_\infty .
\]
Fix $(\varsigma,c)$ and $t$. The coordinate map $\pi\mapsto\pi_t$ is $1$-Lipschitz
for $\|\cdot\|_\infty$, so by the Kantorovich duality
$\bigl|\mathbb E_{P_{\varsigma,c}}\pi_t-\mathbb E_{Q_c}\pi_t\bigr|
\le W_1(P_{\varsigma,c},Q_c)$, the Wasserstein-1 distance for that metric, and the
bound is uniform in $t$, giving
$\|\mathbb E_{P_{\varsigma,c}}\Pi-\mathbb E_{Q_c}\Pi\|_\infty
\le W_1(P_{\varsigma,c},Q_c)$.

Assumption~\ref{as:subgauss} states that every $Q_c$ satisfies a sub-Gaussian
concentration property for $\|\cdot\|_\infty$-Lipschitz functions with parameter
$s$. By the Bobkov and G\"otze characterisation \cite{bobkov1999}, this is
equivalent to the transport-entropy inequality
\[
  W_1(P,Q)\le s\sqrt{2\,\mathrm{KL}(P\,\|\,Q)}
  \qquad\text{for every }P\ll Q ,
\]
which is Marton's inequality in the form we need \cite{marton1996}. Applying it
at each $(\varsigma,c)$ and averaging,
\begin{align*}
V(S\mid C)-V_0
&\le Ls\,\mathbb E_{S,C}\bigl[
\sqrt{2\,\mathrm{KL}(P_{S,C}\,\|\,Q_C)}\bigr] \\
&\le Ls\sqrt{2\,\mathbb E_{S,C}\bigl[
\mathrm{KL}(P_{S,C}\,\|\,Q_C)\bigr]} .
\end{align*}
the second step by concavity of the square root and Jensen's inequality. The
remaining expectation is the mutual information,
$\mathbb E_{S,C}[\mathrm{KL}(P_{\Pi\mid S,C}\,\|\,P_{\Pi\mid C})]
=I(S;\Pi\mid C)$
\cite[Ch.~2]{cover2006}, which gives the second term of the minimum in
\eqref{eq:bound}.

For the first term, $h$ is convex, so
$h(\mathbb E[\Pi\mid S,C])\le\mathbb E[h(\Pi)\mid S,C]$ and hence
$V(S\mid C)\le V_\star$. Conditional Jensen also gives
$V_0\le V(S\mid C)$. \hfill$\blacksquare$

\section{Proof of Proposition~\ref{pr:ordinal}}

Let $\eta$ be the one-way efficiency, $P$ the power limit, $\Delta t$ the
interval length, and $B$ the daily grid-side discharge budget. On $\tilde X$ a
schedule is described by the charged and discharged energies per interval,
subject to the per-interval caps and to the requirement that the stored energy
balance over the day.

Introduce the transport variables $m_{ij}\ge0$, the grid-side energy bought in
interval $i$ and later sold in interval $j$. Delivering $m$ to the grid at $j$
withdraws $m/\eta$ from storage, and storing that amount required buying
$m/\eta^2$ at $i$, so the revenue of that unit is $\pi_j-\pi_i/\eta^2$ and the
programme is
\begin{align*}
  \max_{m\ge0}\ & \sum_{i,j} m_{ij}\Bigl(\pi_j-\frac{\pi_i}{\eta^2}\Bigr)\\
  \text{s.t.}\ & \sum_{i,j}m_{ij}\le B,\qquad
  \sum_i m_{ij}\le P\Delta t\ \ \forall j,\\
  & \sum_j m_{ij}\le \eta^2 P\Delta t\ \ \forall i .
\end{align*}

The coefficient is separable: $\pi_j-\pi_i/\eta^2=a_j+b_i$ with $a_j=\pi_j$ and
$b_i=-\pi_i/\eta^2$. Writing $D_j=\sum_i m_{ij}$ and $C_i=\sum_j m_{ij}$ for the
discharge and charge masses, the objective equals $\sum_j a_jD_j+\sum_i b_iC_i$
and depends on $m$ only through those two marginals. Any pair of marginals with
equal total mass and satisfying the caps is realisable by some $m\ge0$, for
instance by the north-west corner rule, so the programme is equivalent to
\[
  \max_{m\le B}\ \Bigl[\max_{D:\,\sum D=m}\sum_j\pi_jD_j
  \ -\ \frac1{\eta^2}\min_{C:\,\sum C=m}\sum_i\pi_iC_i\Bigr],
\]
subject to the per-interval caps.

Both inner problems are continuous knapsacks with uniform capacities. Their
solutions fill the intervals in order: $D$ saturates the dearest intervals
first, $C$ the cheapest first. Hence at total mass $m$ the objective is
$\Phi(m)=\int_0^m\bigl(\pi_{(\uparrow)}(u)-\pi_{(\downarrow)}(u)/\eta^2\bigr)
\,du$, where $\pi_{(\uparrow)}$ lists prices in decreasing order and
$\pi_{(\downarrow)}$ in increasing order, both stretched by the interval
capacity. The integrand is non-increasing in $u$, so $\Phi$ is concave and its
maximiser is the largest $m\le B$ at which the integrand is non-negative, that
is, at which the $m$-th dearest price exceeds the $m$-th cheapest divided by
$\eta^2$.

The optimal schedule is therefore determined by the two sorted orders of the
price vector together with the index at which the ratio test
$\pi_{(\uparrow)}>\pi_{(\downarrow)}/\eta^2$ fails. Two price vectors sharing
both objects share an optimal schedule, whatever their magnitudes, which is
the claim. Corollary~\ref{co:ordinal} follows by evaluating the true revenue
$\sum_{ij}m_{ij}(\pi_j-\pi_i/\eta^2)$ at the schedule induced by the signal and
bounding each mismatched transfer by its true coefficient gap. No equality is
claimed when one exchange changes several pairwise orderings.
\hfill$\blacksquare$

\section*{Reproducibility}

The protocol, including the hypothesis and the four falsification clauses, was
written before any experiment was run and is versioned as \texttt{PLAN.md}. The
price series and the dispatch model are public at
\url{https://github.com/brieuclerouxtardif-blip/bess-arbitrage-fr}, at commit
\texttt{7742ad9}; the experiment definitions, the raw case-by-case results, the
seeds and the figure scripts are available from the author.
Every numerical value in the manuscript
is read from a single generated macro file, so no number is typed by hand and a
change of calibration cannot leave a stale figure behind. Fig.~1 is a schematic
and is the one drawing not produced by that pipeline.

\bibliographystyle{IEEEtran}
\IEEEtriggeratref{10}
\bibliography{refs}

\end{document}